\def\fskip#1{}
\newtheorem{theorem}{Theorem}
\newtheorem{definition}{Definition}
\newtheorem{lemma}{Lemma}
\newtheorem{remark}{Remark}
\def\1{{\bf 1}}
\newcommand{\remove}[1]{}
\def\argmax{\mathop{\rm argmax}}
\begin{document}
\title{Open-Loop Equilibrium Strategies for Dynamic Influence Maximization Game Over Social Networks}
\author{\authorblockN{S. Rasoul Etesami*}
 \authorblockA{\vspace{-1.2cm}
}
\thanks{*Department of Industrial and Systems Engineering,  Coordinated Science Lab,  University of Illinois at Urbana-Champaign,  Urbana, IL 61801 (etesami1@illinois.edu).  This work is supported by the NSF CAREER Award under Grant No. EPCN-1944403.}
}
\maketitle
\begin{abstract}
We consider the problem of budget allocation for competitive influence maximization over social networks.  In this problem, multiple competing parties (players) want to distribute their limited advertising resources over a set of social individuals to maximize their long-run cumulative payoffs.  It is assumed that the individuals are connected via a social network and update their opinions based on the classical DeGroot model.  The players must decide the budget distribution among the individuals at a finite number of campaign times to maximize their overall payoff given as a function of individuals' opinions.  We show that i) the optimal investment strategy for the case of a single-player can be found in polynomial time by solving a concave program, and ii) the open-loop equilibrium strategies for the multiplayer dynamic game can be computed efficiently by following natural regret minimization dynamics.  Our results extend the earlier work on the static version of the problem to a dynamic multistage game.
\end{abstract}
\begin{keywords}
Opinion dynamics; social networks; network resource allocation; dynamic games; open-loop Nash equilibrium; convex optimization. 
\end{keywords}

\section{Introduction}
Due to the rapid proliferation of online social networks such as Facebook and Twitter,  the role of advertising strategies to influence public opinion has elevated to an entirely new level.  Such external influences mainly posed by online platforms, social media, political leaders, or product companies have emerged in many socioeconomic networks. For instance, political leaders often advertise their opinions through repeated campaigns among social individuals to win the election.  A common fact about all such advertising problems is that the evolution of public opinion comprises of two factors: i) the advertising or investment strategies of external influencers such as companies or political leaders (hereafter referred to as players), and ii) the internal interactions among social entities (hereafter referred to as individuals). Therefore, a major challenge for the players is to deploy investment strategies to shift public opinion toward their objectives.  As players often have limited budgets, they must choose their investment strategies while respecting certain budget constraints.

Opinion influence maximization has been extensively studied in the past literature, mainly using static optimization or static games.  For instance,  the authors of \cite{ahmadinejad2015forming} consider an optimization problem to find influential people in a social network and employ them to spread a desired behavior. A closely related problem is the seed selection problem \cite{ahmadinejad2015forming,kempe2003maximizing,gionis2013opinion,garimella2017balancing,chen2010scalable},  in which the goal is to select a limited subset of nodes in a social network as initial seeds of a specific diffusion process to influence the highest number of individuals at the end of the process. The seed selection optimization problems are typically NP-hard, although, under certain assumptions, constant approximation algorithms are known \cite{ahmadinejad2015forming,kempe2003maximizing,chen2010scalable}. 

The competitive version of the seed selection problem in which multiple influencers want to maximize their social influence has been studied using static noncooperative games \cite{alon2010note,etesami2016complexity,grabisch2018strategic,maehara2015budget,christia2021scalable,goyal2019competitive}. For instance, \cite{alon2010note} proposes a diffusion game to study the spread of different products on a social network as a result of seed selection strategies.  However, as shown in \cite{etesami2016complexity},  even deciding whether such a diffusion game admits a pure-strategy Nash equilibrium is an NP-complete problem.  Since the existence of pure-strategy Nash equilibria is a desirable property of an influence maximization game (especially for marketing strategies), different variants of influence maximization games have been proposed in the past literature \cite{maehara2015budget,christia2021scalable}.  For instance,  \cite{christia2021scalable} relaxes the discrete nature of the seed selection strategies to continuous budget allocation strategies and uses some results from socially concave games \cite{even2009convergence} to establish the existence of a pure-strategy Nash equilibrium.  Moreover,  the authors of \cite{maehara2015budget} use integral budget allocation strategies and model the influence game as a potential game to show the existence of pure Nash equilibria.  

Although the problem of competitive influence maximization has been well-understood from \emph{static} game-theoretic perspective (see,  e.g.,  \cite{grabisch2018strategic,masucci2014strategic,bindel2015bad,varma2018marketing,goyal2019competitive}),  the results based on \emph{dynamic} games are scarce and rely on simplified models.  For instance,  \cite{wang2020controlling,niazi2016consensus,niazi2020differential} use linear-quadratic differential games to model and control consensus formation among individuals by deploying advertising strategies.  Those works use maximum principle and Hamilton-Jacobi-Bellman (HJB) equations to characterize the advertising equilibrium trajectories.  Unfortunately, extending those results is difficult because solving the HJB equation or the maximum principle for nonquadratic differential games is much more complex.  

Perhaps, our work is most related to \cite{varma2019allocating}, in which the authors consider a multistage influence maximization game formulated as a two-player zero-sum game with a hybrid state process. Although that formulation provides a well-justified model for dynamic competitive influence maximization \cite{varma2019allocating,christia2021scalable}, obtaining its equilibrium strategies poses substantial challenges due to the nonquadratic structure of the cost functions and nonlinear state jump dynamics.  This work takes the first step toward solving such a dynamic hybrid game and obtains an efficient algorithm for solving its \emph{open-loop} equilibrium strategies that satisfy certain convexity assumptions.  We should mention that computing a \emph{closed-loop} equilibrium strategy in the online sequential version of our game is a more challenging task, which we leave as a future research direction.

The paper is organized as follows.  In Section \ref{sec:problem-formulation}, we formulate the problem.  In Section \ref{sec:single-agent},  we address optimal investment strategies for the case of a single player with concave stage utility functions.  In Section \ref{sec:multi-player},  we develop an efficient algorithm to obtain open-loop equilibrium strategies for multiple players with convex stage utility functions that satisfy a certain socially concave property. We conclude the paper in Section \ref{sec:conclusion}, and relegate omitted proofs to Appendix I. 

\section{Influence Maximization Game}\label{sec:problem-formulation}

In this section, we formally introduce the dynamic influence maximization game.  A simplified two-player version of this game was originally proposed in \cite{varma2019allocating},  which can also be viewed as a dynamic version of the game proposed in \cite{christia2021scalable}.

Consider a set $[n]\!=\!\{1,\ldots,n\}$ of individuals that interact over a directed social network $\mathcal{G}=([n],\mathcal{E})$, and a set $[m]=\{1,\ldots,m\}$ of influencers (players). We denote the advertising budget of player $j\in [m]$ by $\beta_j\ge 0$. At any time $t\in [0, t_f]$, where $t_f$ is the final time in the game, we denote the opinion of individual $i\in [n]$ about the players by an $m$-dimensional row vector $\boldsymbol{x}_i(t):=(x_{i1}(t),\ldots,x_{im}(t))\in [0,1]^m$, where $x_{ij}(t)$ is the opinion of individual $i$ about player $j$ at time $t$. We also use the matrix $\boldsymbol{x}(t)\!\in\![0,1]^{n\times m}$ to refer to the opinions of all the individuals at time $t$. 

We assume that there are $K$ advertising campaign times $0\leq t_1<\ldots<t_K\leq t_f$, in which the individuals undergo the influence of the players.  By convention, we define $t_0:=0$ and $t_{K+1}:=t_f$ to refer to initial and final times of the game. Let us denote the remaining advertising budget of player $j$ at campaign time $t_k$ by $\beta_j(t_k)$, where we note that $\beta_j(t_1)=\beta_j$.  At campaign time $t_k, k\in[K]$,  each player $j$ invests according to the (action) vector $\boldsymbol{b}_j(k)=(b_{1j}(k),\ldots,b_{nj}(k))'$,  where $b_{ij}(k)$ is the amount of budget that player $j$ invests on individual $i$ at campaign time $t_k$.  In particular, the action space for player $j$ at time $t_k$ is defined as $\mathcal{B}_j(k):=\{\boldsymbol{b}\in \mathbb{R}_+^n:\sum_{i=1}^n{b}_i\leq \beta_j(k)\}$.  In the absence of the players, i.e., during time intervals $\cup_{k=1}^{K+1}(t_{k-1}, t_k)$,  we assume that individuals' opinions evolve according to the continuous-time DeGroot model \cite{degroot1974reaching},  which is given by the ordinary differential equation $\dot{\boldsymbol{x}}(t)=-L\boldsymbol{x}(t), \boldsymbol{x}(t_0)=\boldsymbol{x}_0$. Here, $\boldsymbol{x}_0$ is the individuals' initial opinions and $L$ is the weighted Laplacian matrix corresponding to the social network $\mathcal{G}=([n],\mathcal{E})$.  As a result of the advertising campaigns at time instances $\{t_k\}_{k=1}^{K}$,  the opinions of the individuals undergo a state jump, which obeys the hybrid model
\begin{align}\nonumber 
&\dot{\boldsymbol{x}}(t)\!=\!-L\boldsymbol{x}(t), \boldsymbol{x}(t_0)\!=\!\boldsymbol{x}_0\ \forall t\in [t_0, t_{K+1}]\!\setminus\! \{t_k\}_{k=1}^{K}\cr 
&x_{ij}(t^+_k)\!=\!\phi\big(x_{ij}(t_k),b_{i1}(k),\ldots,b_{im}(k)\big) \forall i,\!j,\!k\!\in\![K].
\end{align}
Here,  $t^+_k$ is an infinitesimal time after $t_k$,  and the jump function $\phi:\mathbb{R}_+^{m+1}\to [0,1]$ captures the change in individuals' opinions as a result of players' investment strategies at campaign times.  In this work, we consider two specific jump functions that have been well-justified in the past literature both from an axiomatic approach, as well as probabilistic and economic perspectives, \cite{christia2021scalable,varma2018marketing,varma2019allocating,ahmadinejad2015forming}.  The first jump function is for the case where there is only one player,  in which case we can drop the dependency on index $j$ and write
\begin{align}\label{eq:single-player}
\phi\big(x_{i}(t_k),b_{i}(k)\big)=x_i(t_k)+b_i(k),
\end{align}
where $b_i(k)\in [0, 1-x_i(t_k)]$ to assure that the opinions always remain in the interval $[0, 1]$.  Note that \eqref{eq:single-player} simply assumes that the impact of the single influencer's budget allocation on the individuals' opinions is additive.  In a multiplayer
game,  the impact of any player's budget allocation on some individual $i$ will be reduced as other players allocate budget
to $i$.  For that reason,  we assume that the impact of budget allocation on the individuals' opinions
is normalized and given by
\begin{align}\label{eq:multi-player}
\!\!\!\!\phi\big(x_{ij}(t_k\!),b_{i1}(k),\ldots,b_{im}(k)\big)\!=\!\frac{x_{ij}(t_k)\!+\!b_{ij}(k)}{1\!+\!\sum_{\ell=1}^m \!b_{i\ell}(k)}.
\end{align}
Finally,  each player $j$ aims to maximize its average payoff expressed in terms of individuals' opinions:
\begin{align}\label{eq:cumulative-payoff}
U_j(\boldsymbol{b}_j,\boldsymbol{b}_{-j})=\frac{1}{K\!+\!1}\sum_{k=1}^{K+1}\!u_j(\boldsymbol{x}_j(t_k),\boldsymbol{b}_j(k),k),
\end{align} 
where $\boldsymbol{x}_j(t_k)=(x_{1j}(t_k),\ldots,x_{nj}(t_k))'$ is the vector of individuals' opinions about player $j$ at time $t_k$,  $\boldsymbol{b}_j:=(\boldsymbol{b}_j(1),\ldots,\boldsymbol{b}_j(K))$ is the investment strategy of player $j$,\footnote{By convention, we define $\boldsymbol{b}_j(K\!+\!1)\!=\!\boldsymbol{0}$, as it corresponds to the terminal time rather than a campaign time.} and $u_j(\cdot,k):\mathbb{R}^{2n}_+\to \mathbb{R}_+, k\in[K+1]$ are stage utility functions for player $j$.  Therefore, the goal of each player $j$ is to choose an investment strategy $\boldsymbol{b}_j$ to maximize its average payoff \eqref{eq:cumulative-payoff} subject to its budget constraint.  

\begin{definition}
An investment strategy profile $(\boldsymbol{b}_1,\ldots,\boldsymbol{b}_m)$ is called an \emph{open-loop} equilibrium if each player observes the initial opinions and budgets $\boldsymbol{x}(t_0), \boldsymbol{\beta}(t_0)$,  and chooses its investment strategy sequence that satisfies the equilibrium conditions, i.e.,  $U_j(\boldsymbol{b}_j,\boldsymbol{b}_{-j})\ge U_j(\hat{\boldsymbol{b}}_j,\boldsymbol{b}_{-j})$ for any player $j$ and any admissible investment strategy $\hat{\boldsymbol{b}}_j$.
\end{definition}

\section{Single-Player Optimal Strategy with Concave Stage Functions}\label{sec:single-agent}
First, we consider the case of a single-player whose state jump equation is given by \eqref{eq:single-player}.  Since there is only one player, for simplicity of notation, we drop the dependency of the parameters to the player index $j$.  To obtain the optimal investment strategy for the player, one needs to solve the following optimization problem:
\begin{align}\label{eq:single-marketer}
\max\ &\frac{1}{K\!+\!1}\sum_{k=1}^{K+1}u(\boldsymbol{x}(t_k),\boldsymbol{b}(k),k)\cr 
&\dot{\boldsymbol{x}}(t)\!=\!-L\boldsymbol{x}(t), \boldsymbol{x}(t_0)\!=\!\boldsymbol{x}_0 \ t\!\in\! [t_0,\!t_{K\!+\!1}]\!\!\setminus\!\! \{t_k\}_{k=1}^{K}\cr 
&\boldsymbol{x}(t^+_k)=\boldsymbol{x}(t_k)+\boldsymbol{b}(k) \ \forall k\in[K],\cr 
&\boldsymbol{b}(k)\leq \boldsymbol{1}-\boldsymbol{x}(t_k) \ \forall k\in[K],\cr 
&\sum_{k=1}^K\boldsymbol{1}'\boldsymbol{b}(k)\leq \beta, \ \boldsymbol{b}(k)\in \mathbb{R}^n_+\ \forall k\in[K],
\end{align} 
where $\boldsymbol{1}$ is the column vector of all ones,  and the third constraints in \eqref{eq:single-marketer} are componentwise.  Here, $\boldsymbol{x}(t),\boldsymbol{b}(k)\!\in\! \mathbb{R}_+^{n}$, and $u(\cdot,k)\!:\!\mathbb{R}^{2n}_+\!\to \mathbb{R}_+,  k\in [K\!+1]$ are stage utility functions.  As before, $L$ denotes the Laplacian matrix of the underlying social network $\mathcal{G}$, $\beta>0$ is the player's budget,  and $t_{k}, k\in [K]$ are the campaign times that are all known to the player a priori. The goal of the player is to find a sequence of investment vectors $\boldsymbol{b}(k), k\in [K]$ to solve the optimization problem \eqref{eq:single-marketer}. 
\begin{theorem}
Let us assume that the stage utility functions $u(\cdot,k)\!:\!\mathbb{R}^{2n}_+\!\to \mathbb{R}_+,  k\in [K\!+\!1]$ are concave.  Then, an optimal investment strategy for single-player influence maximization \eqref{eq:single-marketer} can be found in polynomial time via a concave program. 
\end{theorem}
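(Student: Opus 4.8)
The plan is to eliminate the continuous-time dynamics and recast \eqref{eq:single-marketer} as a finite-dimensional concave maximization over a polytope in the decision variables $\boldsymbol{b}(1),\ldots,\boldsymbol{b}(K)$. First I would integrate the linear flow $\dot{\boldsymbol{x}}(t)=-L\boldsymbol{x}(t)$ on each inter-campaign interval $(t_{k-1},t_k)$, which yields the state-transition matrices $\Phi_k:=e^{-L(t_k-t_{k-1})}$ for $k\in[K+1]$. Combining the flow with the jump rule $\boldsymbol{x}(t_k^+)=\boldsymbol{x}(t_k)+\boldsymbol{b}(k)$ gives the recursion $\boldsymbol{x}(t_k)=\Phi_k\big(\boldsymbol{x}(t_{k-1})+\boldsymbol{b}(k-1)\big)$, which unrolls to $\boldsymbol{x}(t_k)=\Phi_k\cdots\Phi_1\,\boldsymbol{x}_0+\sum_{r=1}^{k-1}\Phi_k\cdots\Phi_{r+1}\,\boldsymbol{b}(r)$. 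Thus every campaign-time opinion vector $\boldsymbol{x}(t_k)$ is an \emph{affine} function of the investment variables, with constant coefficient matrices determined solely by $L$ and the known campaign times, and these matrices are computable in polynomial time via the matrix exponential.

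Next I would substitute these affine expressions into the objective and the constraints. Each stage term $u\big(\boldsymbol{x}(t_k),\boldsymbol{b}(k),k\big)$ is the composition of the concave map $u(\cdot,k)$ with an affine map of the decision variables, hence concave; a positively weighted sum of concave functions is concave, so the objective of \eqref{eq:single-marketer} is concave. The coupling constraint $\boldsymbol{b}(k)\le\boldsymbol{1}-\boldsymbol{x}(t_k)$ becomes a linear inequality once $\boldsymbol{x}(t_k)$ is replaced by its affine form, while $\sum_{k}\boldsymbol{1}'\boldsymbol{b}(k)\le\beta$ and $\boldsymbol{b}(k)\in\mathbb{R}^n_+$ are already linear, so the feasible region is a polytope in the $nK$ variables.

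Putting these together, \eqref{eq:single-marketer} is equivalent to maximizing a concave function over a bounded polyhedron, i.e., a concave program of polynomial size ($nK$ variables and $O(nK)$ linear constraints), which admits a polynomial-time solution by standard convex-optimization methods such as interior-point or ellipsoid algorithms, establishing the claim. The main point requiring care is verifying the affine state representation and that it preserves the problem's structure; in particular, that the $[0,1]$-admissibility of the opinions is maintained. This follows because $-L$ is Metzler with zero row sums, so each $\Phi_k$ is row-stochastic and the flow keeps opinions in $[0,1]^n$, after which $\boldsymbol{b}(k)\le\boldsymbol{1}-\boldsymbol{x}(t_k)$ enforces the upper bound across each jump. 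I expect the only delicate bookkeeping to be the indexing of the transition matrices and the terminal term $k=K+1$, where $\boldsymbol{b}(K+1)=\boldsymbol{0}$, neither of which affects concavity.
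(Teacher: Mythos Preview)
Your proposal is correct and follows essentially the same route as the paper: integrate the linear flow to obtain transition matrices, unroll the jump recursion so that each $\boldsymbol{x}(t_k)$ is affine in the investment variables, and conclude that the objective is concave (affine precomposition of concave stage functions) over a polytope of size $O(nK)$. The paper's only cosmetic difference is that it absorbs the initial condition into the sum by setting $\boldsymbol{b}(0):=\boldsymbol{x}_0$ and writes the transition matrices as $A_{ks}=e^{-L(t_k-t_s)}$ rather than as products of your $\Phi_k$'s; your extra remark on the row-stochasticity of $\Phi_k$ keeping opinions in $[0,1]^n$ is not used in the paper's proof of this theorem but is correct and helpful.
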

\begin{proof}
It is known that the solution to the linear dynamics $\dot{\boldsymbol{x}}(t)=-L\boldsymbol{x}(t),  \boldsymbol{x}(t_0)=\boldsymbol{x}_0$,  is given by $\boldsymbol{x}(t)=e^{-L(t-t_0)}\boldsymbol{x}_0$, where $e^A=\sum_{k=0}^{\infty}\frac{A^k}{k!}$ is the matrix exponential.  By focusing on the $k$th interval $[t_{k-1}, t_k]$ and using the constraints in \eqref{eq:single-marketer}, 
\begin{align}\nonumber
\boldsymbol{x}(t_k)&=e^{-L(t_k-t_{k-1})}\boldsymbol{x}(t^+_{k-1})\cr
&=e^{-L(t_k-t_{k-1})}\big(\boldsymbol{x}(t_{k-1})+\boldsymbol{b}(k\!-\!1)\big)\cr 
&=e^{-L(t_k-t_{k-1})}\boldsymbol{x}(t_{k-1})\!+\!e^{-L(t_k-t_{k-1})}\boldsymbol{b}(k\!-\!1).
\end{align}
Using the above relation recursively, we can express each $\boldsymbol{x}(t_k)$ in terms of budget variables as
\begin{align}\nonumber
\boldsymbol{x}(t_k)&=e^{-L(t_k-t_{k-1})}\boldsymbol{b}(k\!-\!1)\!+\!e^{-L(t_k-t_{k-2})}\boldsymbol{b}(k\!-\!2)\cr 
&\qquad+\ldots+e^{-L(t_k-t_1)}\boldsymbol{b}(1)+e^{-L(t_k-t_0)}\boldsymbol{b}(0),
\end{align}
where by convention we define $\boldsymbol{b}(0)=\boldsymbol{x}_0$. For simplicity of notation, let us define the matrices $A_{rs}=e^{-L(t_r-t_{s})}, \forall 0\leq s<r\leq K+1$,  and note that these matrices are fixed and can be computed by the player a priori.  Then, we have $\boldsymbol{x}(t_k)=\sum_{s=0}^{k-1} A_{ks}\boldsymbol{b}(s), \forall k\in [K+1]$. Using this relation, the optimization problem \eqref{eq:single-marketer} can be written as
\begin{align}\label{eq:concave}
\max\ &\frac{1}{K\!+\!1}\sum_{k=1}^{K+1}u\big(\sum_{s=0}^{k-1} A_{ks}\boldsymbol{b}(s),\boldsymbol{b}(k),k\big)\cr  
&\boldsymbol{b}(k)+\sum_{s=0}^{k-1} A_{ks}\boldsymbol{b}(s)\leq \boldsymbol{1} \ \forall k\in[K],\cr 
&\sum_{k=1}^K\boldsymbol{1}'\boldsymbol{b}(k)\leq \beta,  \ \boldsymbol{b}(k)\!\in\! \mathbb{R}^n_+\ \forall k\in[K].
\end{align} 
Since each stage function $u(\cdot,k), k\in[K+1]$ is assumed to be concave and its arguments are linear with respect to budget variables $\boldsymbol{b}(k), k\in [K]$,  the entire objective function in \eqref{eq:concave} is also concave with respect to budget variables $\boldsymbol{b}=(\boldsymbol{b}(1),\ldots,\boldsymbol{b}(K))$. As a result,  \eqref{eq:concave} is a concave program subject to $2Kn+1$ linear constraints and $Kn$ variables, and hence can be solved in polynomial time.
\end{proof}


\section{Multiplayer Dynamic Influence Maximization Game}\label{sec:multi-player}
In this section, we consider the case where there are multiple competing players whose state jump equation is given by \eqref{eq:multi-player}.  More precisely,  as a result of advertising campaigns, the individuals' opinions now follow the hybrid model:
\begin{align}\label{eq:multi-marketer}
&\!\!\!\!\dot{\boldsymbol{x}}(t)\!=\!-L\boldsymbol{x}(t), \boldsymbol{x}(t_0)\!=\!\boldsymbol{x}_0,  t\!\in\! [t_0,\!t_{K\!+\!1}]\!\!\setminus\!\! \{t_k\}_{k=1}^{K}\cr 
&\!\!\!\!x_{ij}(t^+_k)=\frac{x_{ij}(t_k)+b_{ij}(k)}{1+\sum_{\ell=1}^m b_{i\ell}(k)}, \ \forall i,j,k\in[K],
\end{align} 
where we recall that $b_{ij}(k)$ denotes the advertising budget of player $j$ on individual $i$ at campaign time $t_k$.  Note that since there are $m$ different players, the opinion of individual $i$ is an $m$-dimensional row vector $\boldsymbol{x}_i(t)\in [0,1]^m$. In particular,  $\boldsymbol{x}(t)$ is an $n\times m$ matrix whose $i$th row equals $\boldsymbol{x}_i(t)$.  Also, notice that if the opinion vector of individual $i$ lies in the probability simplex before the players allocate budget to that individual, then the update rule \eqref{eq:multi-marketer} guarantees that the opinion vector will still lie in the probability simplex post budget allocation.

For any $k\in[K]$, let us define the diagonal matrix $D(k)=\mbox{diag}(\frac{1}{1+\sum_{\ell=1}^m b_{1\ell}(k)},\ldots,\frac{1}{1+\sum_{\ell=1}^m b_{n\ell}(k)})$, and the $n\times m$ budget matrix $B(k)=(b_{ij}(k))_{i,j}$.  Note that the $j$th column of $B(k)$ is precisely the investment strategy of player $j$ at time $t_k$, that is $B(k)=(\boldsymbol{b}_1(k)|\ldots|\boldsymbol{b}_m(k))$.  We can now write the state jump equation in \eqref{eq:multi-marketer} in a compact form as
\begin{align}\nonumber
\boldsymbol{x}(t_k^+)=D(k)\boldsymbol{x}(t_k)+D(k)B(k),  \ \forall k\in [K].
\end{align} 
Using an inductive argument as in the case of a single player, and because the multiplayer opinion dynamics can be decomposed into $m$ separate single-player dynamics, i.e., $\dot{\boldsymbol{x}}_j(t)=-L\boldsymbol{x}_{j}(t)\ \forall j\in [m]$,  for any $k\in [K+1]$ we can write 
\begin{align}\nonumber
&\boldsymbol{x}(t_k)=e^{-L(t_k-t_{k-1})}\boldsymbol{x}(t^+_{k-1})\cr 
&=A_{k,k-1}(D(k-1)\boldsymbol{x}(t_{k-1})+D(k-1)B(k-1))\cr 
&=A_{k,k-1}D(k-1)A_{k-1,k-2}D(k-2)\boldsymbol{x}(t_{k-2})\cr 
&\qquad+A_{k,k-1}D(k-1)A_{k-1,k-2}D(k-2)B(k-2)\cr
&\qquad+A_{k,k-1}D(k-1)B(k-1)=\cdots\cr
&=\sum_{s=0}^{k-1}\Big(\prod_{r=s}^{k-1}A_{r+1,r}D(r)\Big)B(s),
\end{align}
where by convention $D(0)=I_{n\times n}, B(0)=\boldsymbol{x}_0$, and the product in the above expression multiplies the matrices from the left side. Thus, given fixed strategy of all others, the optimization problem that the $j$-th player faces can be formulated as:
\begin{align}\label{eq:x-expression}
\max\ &\frac{1}{K\!+\!1}\sum_{k=1}^{K+1}u_j(\boldsymbol{x}_j(t_k),\boldsymbol{b}_j(k),k)\cr
&\boldsymbol{x}_j(t_k)=\sum_{s=0}^{k-1}\Big(\prod_{r=s}^{k-1}A_{r+1,r}D(r)\Big)\boldsymbol{b}_j(s)\ \forall k,\cr 
&\sum_{k=1}^{K}\boldsymbol{1}'\boldsymbol{b}_j(k)\leq \beta_{j}, \ \boldsymbol{b}_j(k)\!\in\! \mathbb{R}^n_+\ \forall k\!\in\![K].
\end{align}

Next, we consider the following definition from online convex optimization.
\begin{definition}
Given a convex set $\mathcal{C}$ and a sequence of concave functions $f^\tau:\mathcal{C}\to \mathbb{R}, \tau\in [T]$,  let $\mathcal{A}$ be an online algorithm that at each time $\tau$ selects a point $x^{\tau}\in \mathcal{C}$, and let $x$ be the optimal static solution, i.e.,  $x=\argmax_{y\in \mathcal{C}} \sum_{\tau=1}^{T}f^{\tau}(y)$. Then, the regret of the algorithm $\mathcal{A}$ is defined by $\mathcal{R}_{\mathcal{A}}(T)=\sum_{\tau=1}^{T}f^{\tau}(x)-\sum_{\tau=1}^{T}f^{\tau}(x^{\tau})$.  An algorithm $\mathcal{A}$ has no regret, if $\mathcal{R}_{\mathcal{A}}(T)=o(T)$.
\end{definition}

\begin{theorem}\label{thm:multi-player}
Let $u_{j}(\cdot,\cdot,k)\!:\![0,1]^n\!\times\! \mathbb{R}_+^{n} \!\to\! \mathbb{R}$, $\forall k,j$, be twice differentiable functions that are increasing and convex with respect to their first argument.  If there are positive constants $\lambda_j$ such that $\sum_{j=1}^m\lambda_jU_j(\boldsymbol{b}_j,\boldsymbol{b}_{-j})$ is a concave function in $\boldsymbol{b}=(\boldsymbol{b}_j,\boldsymbol{b}_{-j})$, then the game admits a pure-strategy open-loop equilibrium. Moreover, any no-regret algorithm that players use in the repeated version of the game will converge to one such equilibrium.
\end{theorem}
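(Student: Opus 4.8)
The plan is to show that the multiplayer game with payoffs \eqref{eq:cumulative-payoff} and the reduced state map of \eqref{eq:x-expression} is a \emph{socially concave game} in the sense of Even-Dar, Mansour, and Nadav \cite{even2009convergence}, and then to invoke their existence and convergence guarantees. Recall that a game is socially concave when (C1) there exist weights $\lambda_j>0$ with $\sum_{j}\lambda_j U_j$ jointly concave in $\boldsymbol{b}$, and (C2) for each player $j$ the payoff $U_j(\boldsymbol{b}_j,\boldsymbol{b}_{-j})$ is convex in the opponents' profile $\boldsymbol{b}_{-j}$ for every fixed $\boldsymbol{b}_j$. These two conditions together force each $U_j$ to be concave in its own variable $\boldsymbol{b}_j$ (write $\lambda_j U_j=\sum_i\lambda_iU_i-\sum_{i\neq j}\lambda_iU_i$ and note the second sum is convex in $\boldsymbol{b}_j$ by (C2)), so a socially concave game is in particular a concave game and inherits Rosen's existence theorem; moreover the averaged play of no-regret dynamics provably converges to an equilibrium.

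First I would record the routine ingredients. Each action set $\mathcal{B}_j=\{(\boldsymbol{b}_j(1),\ldots,\boldsymbol{b}_j(K)):\boldsymbol{b}_j(k)\in\mathbb{R}^n_+,\ \sum_{k}\boldsymbol{1}'\boldsymbol{b}_j(k)\le\beta_j\}$ is a scaled simplex, hence convex and compact, which is the standing regularity needed for both results. Condition (C1) is exactly the standing hypothesis that $\sum_j\lambda_jU_j$ is concave. The entire content of the proof therefore reduces to establishing (C2).

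For (C2) I would exploit the explicit budget dependence of the state. Fix $\boldsymbol{b}_j$ and view everything as a function of $\boldsymbol{b}_{-j}$. Expanding the matrix product in \eqref{eq:x-expression}, each entry of $\prod_{r=s}^{k-1}A_{r+1,r}D(r)$ is a finite sum, over directed paths, of terms of the form $c\prod_{r=s}^{k-1}\frac{1}{1+\sum_{\ell}b_{i(r)\ell}(r)}$, where $i(r)\in[n]$ is the node visited at step $r$ and $c\ge0$ is a product of entries of the exponentials $A_{r+1,r}=e^{-L(t_{r+1}-t_r)}$, which are nonnegative because $-L$ is Metzler. Each factor $\frac{1}{1+\sum_{\ell}b_{i(r)\ell}(r)}$ is log-convex in $\boldsymbol{b}_{-j}$, since its logarithm equals $-\log(1+\text{affine})$ with a nonnegative affine argument; summing these logarithms shows the product is log-convex, hence convex, and multiplying by $c\ge0$ and summing over paths keeps it convex. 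Consequently every coordinate $x_{ij}(t_k)=\sum_{s}\sum_{i'}[\prod_{r=s}^{k-1}A_{r+1,r}D(r)]_{ii'}\,b_{i'j}(s)$ is a nonnegative combination of convex functions (the coefficients $b_{i'j}(s)$ are the fixed own-budget entries, with the convention that $\boldsymbol{b}_j(0)$ is the $j$th column of $\boldsymbol{x}_0$), hence convex in $\boldsymbol{b}_{-j}$. Since $u_j(\cdot,\boldsymbol{b}_j(k),k)$ is convex and nondecreasing in its first vector argument, the vector composition rule makes each $u_j(\boldsymbol{x}_j(t_k),\boldsymbol{b}_j(k),k)$ convex in $\boldsymbol{b}_{-j}$; averaging over $k$ gives (C2).

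With (C1) and (C2) established, the game is socially concave, hence a concave game admitting a pure-strategy open-loop equilibrium, and the no-regret convergence theorem of \cite{even2009convergence} shows that the time-averaged play of any no-regret dynamics converges to such an equilibrium. I expect the main obstacle to be precisely this convexity of $U_j$ in $\boldsymbol{b}_{-j}$: the normalizing denominators couple all players' budgets through the products of the diagonal matrices $D(r)$, and the nonobvious structural fact that rescues the argument is the log-convexity of $1/(1+\text{affine})$ combined with nonnegativity of the coefficients $c$ and of the matrix exponentials. The monotonicity hypothesis on $u_j$ is indispensable here, since without the ``increasing'' assumption the composition of a convex $u_j$ with convex coordinates need not be convex.
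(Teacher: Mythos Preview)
Your proposal is correct and follows essentially the same route as the paper: verify the two conditions of a socially concave game \cite{even2009convergence}, with (C1) given by hypothesis and (C2) established by expanding each coordinate of $\boldsymbol{x}_j(t_k)$ as a nonnegative path-sum of products $\prod_r 1/(1+\text{affine})$, invoking log-convexity of these products and nonnegativity of $e^{-Lt}$, and then composing with the increasing convex stage utilities. The paper packages these steps into three auxiliary lemmas (the log-convexity computation, the nonnegativity of $e^{-Lt}$ via an ODE argument rather than your Metzler observation, and the composition), but the logical content is identical.
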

\begin{proof}
Using Lemma \ref{lemm-convexity} in Appendix I,  the total utility of player $j$ that is given by $U_j(\boldsymbol{b}_j,\boldsymbol{b}_{-j})=\frac{1}{K+1}\sum_{k=1}^{K+1}u_j(\boldsymbol{x}_j(t_k),\boldsymbol{b}_j(k),k)$,  is a convex function with respect to other players' investement strategies $\boldsymbol{b}_{-j}$. Moreover,  by the assumption, the social utility $\sum_{j=1}^m\lambda_jU_j(\boldsymbol{b}_j,\boldsymbol{b}_{-j})$ is a concave function in $\boldsymbol{b}$. Therefore, the dynamic influence maximization game is a socially concave game \cite[Definition 2.1]{even2009convergence}. Since each player's utility function is twice differentiable,  the game is also a concave game and hence admits a pure-strategy Nash equilibrium \cite[Lemma 2.2]{even2009convergence}.  Finally,  using \cite[Theorem 3.1]{even2009convergence},  if every player $j$ plays according to a no-regret algorithm with a regret of $\mathcal{R}(T)$, then the average strategy vector $\frac{1}{T}\sum_{\tau=1}^T\boldsymbol{b}^{\tau}$ will converge to a pure-strategy Nash equilibrium at a rate of $O(\frac{m\mathcal{R}(T)}{T})$. Thus, to obtain open-loop equilibrium strategies, each player $j$ needs to play a no-regret algorithm with constraint set $\mathcal{C}_j=\{\boldsymbol{b}_j\in \mathbb{R}_+^{nK}: \sum_{k=1}^K\boldsymbol{1}'\boldsymbol{b}_j(k)\leq \beta_j\}$, and the sequence of concave functions $f^{\tau}(\boldsymbol{b}_j):=U_j(\boldsymbol{b}_j,\boldsymbol{b}^{\tau}_{-j})$, where $\boldsymbol{b}^{\tau}_{-j}$ is the joint strategy played by other players at iteration $\tau$.  
\end{proof}

It is often reasonable to assume that the influence maximization game is a constant-sum (hence a socially concave) game, i.e., $\sum_{j=1}^m U_j(\boldsymbol{b})=c$, as players are competing for \emph{fixed} units of individuals' opinions.  More precisely, according to  \eqref{eq:multi-marketer}, an individual distributes its unit opinion fractionally among the players at each campaign time.  Thus,  assuming some linearity of the stage functions,  the total utility that all players can derive  is a constant function of $n, T$, and $\beta_j$ (see Section \ref{sec:numerical}). The result can also be extended to socially concave games as long as the players' stage utilities are linear in opinions $\boldsymbol{x}(t_k)$ and concave with respect to investment strategies $\boldsymbol{b}(k)$.

\begin{remark}\label{rem:final}
There are many no-regret algorithms in the past literature.  For instance,  for strictly concave stage utilities, the online gradient ascent algorithm in \cite{hazan2007logarithmic} achieves a regret bound of $\mathcal{R}(T)=O(\log T)$.  Thus, if the players use such an algorithm, their average strategies converge to an open-loop equilibrium at the speed of $O(\frac{m\log T}{T})$. 
\end{remark}

\subsection{A Numerical Example}\label{sec:numerical}
Let us consider the special two-player game defined in \cite{varma2019allocating},  where the stage utilities are given by linear functions $u_1(\boldsymbol{x}(t_k),\boldsymbol{b}_1(k),k)=\rho(k)'\boldsymbol{x}(t_k)-\lambda_1 \boldsymbol{1}'\boldsymbol{b}_1(k)$ and $u_2(\boldsymbol{1}-\boldsymbol{x}(t_k),\boldsymbol{b}_2(k),k)=\rho(k)'(\boldsymbol{1}-\boldsymbol{x}(t_k))-\lambda_2 \boldsymbol{1}'\boldsymbol{b}_2(k)$. Here, $\rho(k)$ is a nonnegative constant vector and $\lambda_1,\lambda_2$ are two positive constants.  Clearly, these functions are twice differentiable,  and strictly convex and increasing with respect to their first argument.  Moreover,  $U_1(\boldsymbol{b}_1,\!\boldsymbol{b}_{2})+\!U_2(\boldsymbol{b}_1,\boldsymbol{b}_{2})\!=\!\frac{1}{K\!+\!1}(\sum_{k=1}^{K\!+\!1}\rho(k)'\boldsymbol{1}\!-\!\lambda_1\beta_1\!-\!\lambda_2\beta_2)$, which is a constant (and hence a concave) function. Therefore, all the conditions of Theorem \ref{thm:multi-player} are satisfied,  which in view of Remark \ref{rem:final} implies that if both players follow a no-regret gradient ascent algorithm, their average strategies will converge to an open-loop equilibrium at a rate of $O(\frac{\log T}{T})$.

\begin{figure}[t]
\begin{center}
\vspace{-3cm}
\hspace{-2.5cm}
\includegraphics[totalheight=.3\textheight,
width=.4\textwidth,viewport=0 0 400 400]{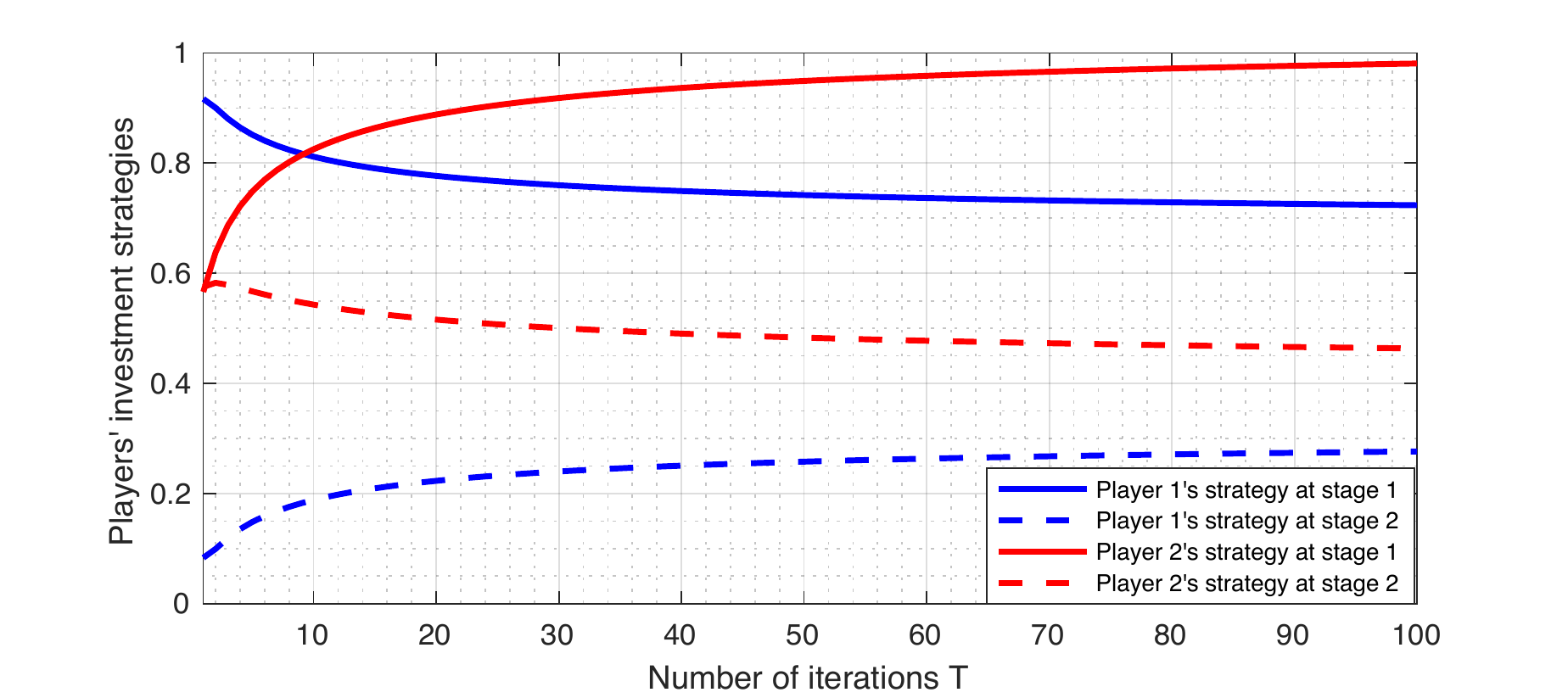} \hspace{0.4in}
\end{center}
\vspace{-0.5cm}
\caption{Players' investment strategies asymptotically converge to their equilibrium values.  At each stage, players invest the same amount in all the individuals, whose values are depicted by solid lines (first stage) and dashed lines (second stage). Blue/red curves correspond to the strategy of the first/second player, respectively. }\label{fig:NE}\vspace{-0.5cm}
\end{figure}

To justify the above analysis numerically, we consider $n=3$ individuals that are connected using a path with Laplacian $L={\tiny\begin{bmatrix}1/3 &-1/3& 0\\ -1/3 & 2/3 & -1/3\\ 0 & -1/3 & 1/3
\end{bmatrix}}$. We let $K=2$, $\lambda_1=\lambda_2=1$,  $\rho(1)=\rho(2)=\boldsymbol{1}$, $\beta_1=3, \beta_2=5$. Finally, we choose the campaign times to be $t_1=1, t_2=2$, and the initial and terminal times to be $t_0=0, t_3=3$. Using these parameters, the players' utilities simplify to $U_1(\boldsymbol{b}_1,\boldsymbol{b}_{2})=\frac{1}{3}\sum_{k=1}^3\boldsymbol{1}'(\boldsymbol{x}(k)-\boldsymbol{b}_{1}(k))$, and $U_2(\boldsymbol{b}_1,\boldsymbol{b}_{2})=3-\frac{1}{3}\sum_{k=1}^3\boldsymbol{1}'(\boldsymbol{x}(k)+\boldsymbol{b}_{2}(k))$,
subject to the state jump dynamics \eqref{eq:multi-marketer} and initial opinions $\boldsymbol{x}_0:=\frac{1}{2}\boldsymbol{1}$.  We run the no-regret gradient ascent dynamics $\boldsymbol{b}_j^{\tau+1}=\Pi_{\mathcal{C}_j}[\boldsymbol{b}^{\tau}_j+\eta_{\tau} \nabla_{\boldsymbol{b}_j} U_j(\boldsymbol{b}^{\tau}_j,\boldsymbol{b}^{\tau}_{-j})]$ for both players $j=1,2$, with stepsize $\eta_{\tau}=\frac{10}{\tau}$,  and for $T=100$ iterations.\footnote{Here, $\Pi_{\mathcal{C}_j}[\cdot]$ denotes the $\ell_2$-norm projection operator on the budget constraint set $\mathcal{C}_j=\{\boldsymbol{b}_j\in \mathbb{R}_+^{6}: \sum_{k=1}^2\boldsymbol{1}'\boldsymbol{b}_j(k)\leq \beta_j\}$. } The result of average investment strategies $\frac{1}{T}\sum_{\tau=1}^T\boldsymbol{b}_j^{\tau}$ for player 1 and player 2 are illustrated in Figure \ref{fig:NE} by blue and red curves, respectively.  In this figure, the solid/dashed lines correspond to the investment strategies at the first/second campaign time, where the amount of investment on all the individuals at each stage is the same.  As can be seen, the investment curves converge asymptotically at a rate of $O(\frac{\log T}{T})$ to their equilibrium values.  Therefore,  players can run these simulations at the beginning of the game, and the limit values will give them their open-loop equilibrium strategies.

\section{Conclusions}\label{sec:conclusion}

We studied a dynamic influence maximization game that provides a natural paradigm to model competitive advertising over social networks. We considered this problem from both optimization and game-theoretic perspectives and developed efficient algorithms for computing optimal investment strategies and open-loop Nash equilibria.  One interesting future direction is to devise an efficient algorithm for computing feedback Nash equilibria.


\bibliographystyle{IEEEtran}
\bibliography{thesisrefs}

\begin{thebibliography}{10}
\providecommand{\url}[1]{#1}
\csname url@samestyle\endcsname
\providecommand{\newblock}{\relax}
\providecommand{\bibinfo}[2]{#2}
\providecommand{\BIBentrySTDinterwordspacing}{\spaceskip=0pt\relax}
\providecommand{\BIBentryALTinterwordstretchfactor}{4}
\providecommand{\BIBentryALTinterwordspacing}{\spaceskip=\fontdimen2\font plus
\BIBentryALTinterwordstretchfactor\fontdimen3\font minus
  \fontdimen4\font\relax}
\providecommand{\BIBforeignlanguage}[2]{{%
\expandafter\ifx\csname l@#1\endcsname\relax
\typeout{** WARNING: IEEEtran.bst: No hyphenation pattern has been}%
\typeout{** loaded for the language `#1'. Using the pattern for}%
\typeout{** the default language instead.}%
\else
\language=\csname l@#1\endcsname
\fi
#2}}
\providecommand{\BIBdecl}{\relax}
\BIBdecl

\bibitem{ahmadinejad2015forming}
A.~Ahmadinejad, S.~Dehghani, M.~Hajiaghayi, H.~Mahini, S.~Seddighin, and
  S.~Yazdanbod, ``Forming external behaviors by leveraging internal opinions,''
  in \emph{2015 IEEE Conference on Computer Communications (INFOCOM)}.\hskip
  1em plus 0.5em minus 0.4em\relax IEEE, 2015, pp. 1849--1857.

\bibitem{kempe2003maximizing}
D.~Kempe, J.~Kleinberg, and {\'E}.~Tardos, ``Maximizing the spread of influence
  through a social network,'' in \emph{Proceedings of the Ninth ACM SIGKDD
  International Conference on Knowledge Discovery and Data Mining}, 2003, pp.
  137--146.

\bibitem{gionis2013opinion}
A.~Gionis, E.~Terzi, and P.~Tsaparas, ``Opinion maximization in social
  networks,'' in \emph{Proceedings of the 2013 SIAM International Conference on
  Data Mining}.\hskip 1em plus 0.5em minus 0.4em\relax SIAM, 2013, pp.
  387--395.

\bibitem{garimella2017balancing}
K.~Garimella, A.~Gionis, N.~Parotsidis, and N.~Tatti, ``Balancing information
  exposure in social networks,'' in \emph{Proceedings of the Annual Conference
  on Neural Information Processing Systems (NIPS)}, 2017, pp. 4663--4671.

\bibitem{chen2010scalable}
W.~Chen, Y.~Yuan, and L.~Zhang, ``Scalable influence maximization in social
  networks under the linear threshold model,'' in \emph{2010 IEEE International
  Conference on Data Mining}.\hskip 1em plus 0.5em minus 0.4em\relax IEEE,
  2010, pp. 88--97.

\bibitem{alon2010note}
N.~Alon, M.~Feldman, A.~D. Procaccia, and M.~Tennenholtz, ``A note on
  competitive diffusion through social networks,'' \emph{Information Processing
  Letters}, vol. 110, no.~6, pp. 221--225, 2010.

\bibitem{etesami2016complexity}
S.~R. Etesami and T.~Ba{\c{s}}ar, ``Complexity of equilibrium in competitive
  diffusion games on social networks,'' \emph{Automatica}, vol.~68, pp.
  100--110, 2016.

\bibitem{grabisch2018strategic}
M.~Grabisch, A.~Mandel, A.~Rusinowska, and E.~Tanimura, ``Strategic influence
  in social networks,'' \emph{Mathematics of Operations Research}, vol.~43,
  no.~1, pp. 29--50, 2018.

\bibitem{maehara2015budget}
T.~Maehara, A.~Yabe, and K.-i. Kawarabayashi, ``Budget allocation problem with
  multiple advertisers: {A} game theoretic view,'' in \emph{International
  Conference on Machine Learning}.\hskip 1em plus 0.5em minus 0.4em\relax PMLR,
  2015, pp. 428--437.

\bibitem{christia2021scalable}
F.~Christia, M.~Curry, C.~Daskalakis, E.~Demaine, J.~P. Dickerson,
  M.~Hajiaghayi, A.~Hesterberg, M.~Knittel, and A.~Milliff, ``Scalable
  equilibrium computation in multi-agent influence games on networks,'' in
  \emph{Proceedings of the AAAI Conference on Artificial Intelligence},
  vol.~35, no.~6, 2021, pp. 5277--5285.

\bibitem{goyal2019competitive}
S.~Goyal, H.~Heidari, and M.~Kearns, ``Competitive contagion in networks,''
  \emph{Games and Economic Behavior}, vol. 113, pp. 58--79, 2019.

\bibitem{even2009convergence}
E.~Even-Dar, Y.~Mansour, and U.~Nadav, ``On the convergence of regret
  minimization dynamics in concave games,'' in \emph{Proceedings of the
  Forty-First Annual ACM Symposium on Theory of Computing}, 2009, pp. 523--532.

\bibitem{masucci2014strategic}
A.~M. Masucci and A.~Silva, ``Strategic resource allocation for competitive
  influence in social networks,'' in \emph{2014 52nd Annual Allerton Conference
  on Communication, Control, and Computing (Allerton)}.\hskip 1em plus 0.5em
  minus 0.4em\relax IEEE, 2014, pp. 951--958.

\bibitem{bindel2015bad}
D.~Bindel, J.~Kleinberg, and S.~Oren, ``How bad is forming your own opinion?''
  \emph{Games and Economic Behavior}, vol.~92, pp. 248--265, 2015.

\bibitem{varma2018marketing}
V.~S. Varma, I.-C. Mor{\u{a}}rescu, S.~Lasaulce, and S.~Martin, ``Marketing
  resource allocation in duopolies over social networks,'' \emph{IEEE Control
  Systems Letters}, vol.~2, no.~4, pp. 593--598, 2018.

\bibitem{wang2020controlling}
C.~Wang, V.~Mazalov, and H.~Gao, ``Controlling opinion dynamics and consensus
  and in a social network,'' \emph{Mathematical Game Theory and Applications},
  vol.~12, no.~4, pp. 24--39, 2020.

\bibitem{niazi2016consensus}
M.~U.~B. Niazi, A.~B. {\"O}zg{\"u}ler, and A.~Yildiz, ``Consensus as a {N}ash
  equilibrium of a dynamic game,'' in \emph{2016 12th International Conference
  on Signal-Image Technology \& Internet-Based Systems (SITIS)}.\hskip 1em plus
  0.5em minus 0.4em\relax IEEE, 2016, pp. 365--372.

\bibitem{niazi2020differential}
M.~U. Niazi and A.~B. {\"O}zg{\"u}ler, ``A differential game model of opinion
  dynamics: {A}ccord and discord as {N}ash equilibria,'' \emph{Dynamic Games
  and Applications}, pp. 1--23, 2020.

\bibitem{varma2019allocating}
V.~S. Varma, S.~Lasaulce, J.~Mounthanyvong, and I.-C. Mor{\u{a}}rescu,
  ``Allocating marketing resources over social networks: {A} long-term
  analysis,'' \emph{IEEE Control Systems Letters}, vol.~3, no.~4, pp.
  1002--1007, 2019.

\bibitem{degroot1974reaching}
M.~H. DeGroot, ``Reaching a consensus,'' \emph{Journal of the American
  Statistical Association}, vol.~69, no. 345, pp. 118--121, 1974.

\bibitem{hazan2007logarithmic}
E.~Hazan, A.~Agarwal, and S.~Kale, ``Logarithmic regret algorithms for online
  convex optimization,'' \emph{Machine Learning}, vol.~69, no. 2-3, pp.
  169--192, 2007.

\end{thebibliography}

\section*{Appendix I}

In this Appendix, we complete the proof of Theorem by showing that the utility of each player is a convex function of other players' investment strategies (Lemma \ref{lemm-convexity}).  But before that, we first show the following two auxiliary lemmas. 

\begin{lemma}\label{lemma:basic}
Let $\boldsymbol{w}_r\in \mathbb{R}_+^m, r\in [d]$ be nonnegative weight vectors for some $d\in\mathbb{Z}_+$, and $a_r> 0, r\in [d]$ be positive constants. Then, the multivariable function $h(\boldsymbol{y}):=\prod_{r=1}^d\big(\frac{1}{a_r+\boldsymbol{w}'_r\boldsymbol{y}_r}\big)$ is a (jointly) convex function of $\boldsymbol{y}=(\boldsymbol{y}_1,\ldots,\boldsymbol{y}_d)\in \mathbb{R}^{dm}_+$.  
\end{lemma}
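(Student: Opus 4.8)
The plan is to avoid a direct Hessian computation (which would be cumbersome for a $d$-fold product living in $dm$ variables) and instead exploit the sharper notion of \emph{log-convexity}. The key conceptual point is that a product of convex functions need not be convex, so convexity of each factor individually is \emph{not} enough; however, each factor here turns out to be log-convex, and log-convexity is preserved under finite products. Since every log-convex function is convex, this routing through logarithms will immediately deliver the claim.

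First I would fix the domain and record positivity. For each $r\in[d]$ the quantity $a_r+\boldsymbol{w}'_r\boldsymbol{y}_r$ is an affine function of the full variable $\boldsymbol{y}$ (linear in the block $\boldsymbol{y}_r$ and constant in the remaining blocks), and since $\boldsymbol{w}_r\ge \boldsymbol{0}$ and $\boldsymbol{y}_r\ge \boldsymbol{0}$ on $\mathbb{R}^{dm}_+$ we have $a_r+\boldsymbol{w}'_r\boldsymbol{y}_r\ge a_r>0$. Hence every factor $g_r(\boldsymbol{y}):=\big(a_r+\boldsymbol{w}'_r\boldsymbol{y}_r\big)^{-1}$ is well defined and strictly positive throughout the domain, so its logarithm is available everywhere.

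Next I would establish log-convexity of each factor. Writing $\log g_r(\boldsymbol{y})=-\log\big(a_r+\boldsymbol{w}'_r\boldsymbol{y}_r\big)$, the inner map is a positive affine function of $\boldsymbol{y}$; since $\log$ is concave and concavity is preserved under affine substitution, $\log\big(a_r+\boldsymbol{w}'_r\boldsymbol{y}_r\big)$ is concave in $\boldsymbol{y}$, and negating it renders $\log g_r$ a \emph{jointly} convex function on $\mathbb{R}^{dm}_+$ (the fact that it depends only on the block $\boldsymbol{y}_r$ is harmless, as a function convex in a subset of coordinates is convex in all of them). Therefore each $g_r$ is log-convex.

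Finally, I would assemble the product. Because $\log h(\boldsymbol{y})=\sum_{r=1}^d\log g_r(\boldsymbol{y})$ is a finite sum of convex functions, it is convex, so $h$ is log-convex. Writing $h=\exp\big(\log h\big)$ and using that $\exp$ is convex and nondecreasing, the composition of a convex nondecreasing outer function with a convex inner function is convex, which yields joint convexity of $h$ on $\mathbb{R}^{dm}_+$. The only genuine obstacle is the opening realization that motivates the whole argument---that factorwise convexity is insufficient and one must upgrade to log-convexity; once that is recognized, every remaining step reduces to standard convexity-preservation rules.
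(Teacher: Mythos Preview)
Your proof is correct and follows essentially the same route as the paper: both take logarithms to turn the product into a sum $\sum_{r=1}^d -\log(a_r+\boldsymbol{w}_r'\boldsymbol{y}_r)$, verify that each summand is convex (as the negative of a concave function composed with an affine map), and then recover convexity of $h$ by exponentiating via the nondecreasing-convex-of-convex composition rule. The only cosmetic difference is that the paper first proves convexity of $f(\boldsymbol{x})=\prod_{r}(a_r+x_r)^{-1}$ on $\mathbb{R}^d_+$ and then substitutes $x_r=\boldsymbol{w}_r'\boldsymbol{y}_r$, whereas you fold the affine substitution in from the outset.
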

\begin{proof}
Let us define $f:\mathbb{R}_+^d\to \mathbb{R}_+$ by $f(\boldsymbol{x})=\prod_{r=1}^d(\frac{1}{a_r+x_r})$, and consider the function $g(\boldsymbol{x}):=\ln f(\boldsymbol{x})=\sum_{r=1}^d-\ln(a_r+x_r)$. Clearly, each summand $-\ln(a_r+x_r)$ is a convex function as $\frac{d^2}{dx_r^2}(-\ln(a_r+x_r))=\frac{1}{(a_r+x_r)^2}>0$. Therefore, $g(\boldsymbol{x})$ is a convex function. Since the composition of an increasing convex function and a convex function is also convex, $f(\boldsymbol{x})=e^{g(\boldsymbol{x})}$ is a convex function.  Finally, we note that since $h(\boldsymbol{y})=f(\boldsymbol{w}_1'\boldsymbol{y}_1,\ldots,\boldsymbol{w}'_d\boldsymbol{y}_d)$, the function $h$ is convex with respect to $\boldsymbol{y}$, as it is obtained from composition of a convex function and several linear functions. More precisely, for any $\lambda\in [0,1]$, we have

\vspace{-0.5cm}
{\small \begin{align}\nonumber
&h(\lambda \boldsymbol{y}+(1-\lambda)\hat{\boldsymbol{y}})\cr 
&=f\big(\lambda (\boldsymbol{w}_1'\boldsymbol{y}_1,\ldots,\boldsymbol{w}'_d\boldsymbol{y}_d)+(1-\lambda)(\boldsymbol{w}_1'\hat{\boldsymbol{y}}_1,\ldots,\boldsymbol{w}'_d\hat{\boldsymbol{y}}_d)\big)\cr 
&\leq \lambda f\big(\boldsymbol{w}_1'\boldsymbol{y}_1,\ldots,\boldsymbol{w}'_d\boldsymbol{y}_d\big)+(1-\lambda) f\big(\boldsymbol{w}_1'\hat{\boldsymbol{y}}_1,\ldots,\boldsymbol{w}'_d\hat{\boldsymbol{y}}_d\big)\cr
&=\lambda h(\boldsymbol{y})+(1-\lambda)h(\hat{\boldsymbol{y}}), 
\end{align}}
where the inequality uses the convexity of $f$.
\end{proof}

\begin{lemma}\label{lemm:stochastic}
Let $L=I-A_{\mathcal{G}}$ be the Laplacian of the network $\mathcal{G}=([n],\mathcal{E})$, where $A_{\mathcal{G}}=(a_{ij})_{i,j}$ is the stochastic weighted adjacency matrix of $\mathcal{G}$. Then, $e^{-Lt}$ is a stochastic matrix for any $t\ge 0$.
\end{lemma}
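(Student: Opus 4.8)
The plan is to exploit the special structure $L=I-A_{\mathcal{G}}$ to factor the matrix exponential into a scalar factor times the exponential of a nonnegative matrix, and then verify the two defining properties of a (row-)stochastic matrix separately: entrywise nonnegativity and unit row sums. First I would write $-Lt=-tI+tA_{\mathcal{G}}$ and observe that $tI$ commutes with $tA_{\mathcal{G}}$ (the identity commutes with every matrix). For commuting matrices the exponential of a sum factors, so
\begin{align}\nonumber
e^{-Lt}=e^{-tI+tA_{\mathcal{G}}}=e^{-tI}\,e^{tA_{\mathcal{G}}}=e^{-t}\,e^{tA_{\mathcal{G}}}.
\end{align}
This reduces everything to understanding $e^{tA_{\mathcal{G}}}$.

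For nonnegativity, I would expand the exponential as the convergent power series $e^{tA_{\mathcal{G}}}=\sum_{k=0}^{\infty}\frac{t^k A_{\mathcal{G}}^k}{k!}$. Since $A_{\mathcal{G}}$ is entrywise nonnegative, every power $A_{\mathcal{G}}^k$ is entrywise nonnegative, and for $t\ge 0$ each term $\frac{t^k}{k!}A_{\mathcal{G}}^k$ is nonnegative; hence the series, and therefore $e^{-t}e^{tA_{\mathcal{G}}}$ after scaling by the positive scalar $e^{-t}$, has nonnegative entries.

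For the row sums I would use that $A_{\mathcal{G}}$ is row-stochastic, i.e.\ $A_{\mathcal{G}}\boldsymbol{1}=\boldsymbol{1}$, where $\boldsymbol{1}$ is the all-ones vector. This gives $A_{\mathcal{G}}^k\boldsymbol{1}=\boldsymbol{1}$ for every $k\ge 0$, so applying the series termwise to $\boldsymbol{1}$ yields $e^{tA_{\mathcal{G}}}\boldsymbol{1}=\sum_{k=0}^{\infty}\frac{t^k}{k!}\boldsymbol{1}=e^{t}\boldsymbol{1}$. Multiplying by $e^{-t}$ then gives $e^{-Lt}\boldsymbol{1}=\boldsymbol{1}$, which says exactly that each row of $e^{-Lt}$ sums to one. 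Combining the two properties shows $e^{-Lt}$ is stochastic for every $t\ge 0$.

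I do not expect a genuine obstacle here; the result follows from the structural identity $L=I-A_{\mathcal{G}}$ together with $A_{\mathcal{G}}\ge 0$ and $A_{\mathcal{G}}\boldsymbol{1}=\boldsymbol{1}$. The only points requiring a word of care are the factorization $e^{-tI+tA_{\mathcal{G}}}=e^{-t}e^{tA_{\mathcal{G}}}$, which hinges on the commutativity of $tI$ and $tA_{\mathcal{G}}$, and the termwise manipulation of the exponential series, which is justified by its absolute convergence. It is worth noting that the decomposition $L=I-A_{\mathcal{G}}$ is consistent with $L$ being a valid (normalized) Laplacian, since $L\boldsymbol{1}=\boldsymbol{1}-A_{\mathcal{G}}\boldsymbol{1}=\boldsymbol{0}$.
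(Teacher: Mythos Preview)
Your proof is correct and cleaner than the paper's for the nonnegativity part, though it takes a different route. For the row-sum property the two arguments are essentially the same (both use $L\boldsymbol{1}=\boldsymbol{0}$ together with the exponential series). For nonnegativity, however, the paper does not factor $e^{-Lt}=e^{-t}e^{tA_{\mathcal{G}}}$; instead it argues dynamically, showing that the trajectories of $\dot{\boldsymbol{x}}=-L\boldsymbol{x}$ starting in $[0,1]^n$ cannot leave the nonnegative orthant because at the first time a coordinate $x_i$ hits zero one has $\dot{x}_i=\sum_{\ell\neq i}a_{i\ell}x_\ell\ge 0$. Your algebraic factorization is more elementary and exposes exactly where each hypothesis enters: $A_{\mathcal{G}}\ge 0$ gives $e^{tA_{\mathcal{G}}}\ge 0$ directly, and the scalar $e^{-t}$ is harmless. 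The paper's ODE argument, on the other hand, is the kind of invariance reasoning that would still work for more general Laplacians of the form $L=D-W$ with $W\ge 0$ and $D$ diagonal (i.e., when $-L$ is merely Metzler rather than a shift of a nonnegative matrix by a multiple of the identity), so it is slightly more robust in spirit even if unnecessary here.
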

\begin{proof}
Notice that all the rows of $e^{-Lt}$ sum to $1$ because $e^{-Lt}\boldsymbol{1}=I\boldsymbol{1}+\sum_{k=1}^{\infty}\frac{t^k}{k!}(I-A_{\mathcal{G}})^{k}\boldsymbol{1}=\boldsymbol{1}$, where the second equality holds because $(I-A_{\mathcal{G}})\boldsymbol{1}=0$ by stochasticity of $A_{\mathcal{G}}$.  Moreover,  for any $\boldsymbol{x}_0\in[0,1]^n$, the solution to the dynamics $\dot{\boldsymbol{x}}(t)\!=\!-L\boldsymbol{x}(t),  \boldsymbol{x}(0)=\boldsymbol{x}_0$ is nonnegative.  The reason is that the solution trajectories are continuous and given by $\boldsymbol{x}(t)=e^{-Lt}\boldsymbol{x}_0$. So if some trajectory $x_i(t)$ becomes negative for the first time, there must be a time $\hat{t}\ge 0$ for which $x_i(\hat{t})=0$. However at that time we have $\dot{x}_i(\hat{t})=-(1-a_{ii})x_i(\hat{t})+\sum_{\ell\neq i}a_{i\ell}x_{\ell}(\hat{t})=\sum_{\ell\neq i}a_{i\ell}x_{\ell}(\hat{t})\ge 0$.  Since $\dot{x}_i(\hat{t})\ge 0$, the trajectory $x_i(t)$ is nondecreasing at time $\hat{t}$, and hence cannot go negative.  Thus, for any $\boldsymbol{x}_0\in[0,1]^n$ and $t\ge 0$, all the coordinates of $e^{-Lt}\boldsymbol{x}_0$ are nonnegative, and hence the matrix $e^{-Lt}$ has only nonnegative entries.   
\end{proof}

\begin{lemma}\label{lemm-convexity}
Let $u_{j}(\cdot,\cdot,k):[0,1]^n\times \mathbb{R}_+^{n} \to \mathbb{R}, \forall k$ be twice differentiable functions that are increasing and convex with respect to their first argument. Then for any $k\in [K+1]$,  the function $u_j(\boldsymbol{x}_j(t_k),\boldsymbol{b}_j(k),k)$ is twice differentiable, and it is convex with respect to other players' strategies.
\end{lemma}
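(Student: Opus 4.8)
The plan is to fix player $j$'s strategy $\boldsymbol{b}_j$ and establish convexity in the remaining variables $\boldsymbol{b}_{-j}$, by first showing that every coordinate of the state $\boldsymbol{x}_j(t_k)$ is a convex function of $\boldsymbol{b}_{-j}$ and then invoking the vector composition rule for convexity. The starting point is the closed-form expression $\boldsymbol{x}_j(t_k)=\sum_{s=0}^{k-1}\big(\prod_{r=s}^{k-1}A_{r+1,r}D(r)\big)\boldsymbol{b}_j(s)$ derived above. Since $\boldsymbol{b}_j$ is held fixed, the only dependence on $\boldsymbol{b}_{-j}$ enters through the diagonal matrices $D(r)$, whose $p$-th diagonal entry is $\frac{1}{1+\sum_{\ell=1}^m b_{p\ell}(r)}=\frac{1}{(1+b_{pj}(r))+\sum_{\ell\neq j}b_{p\ell}(r)}$. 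The key structural observation is that, writing $a_{p,r}:=1+b_{pj}(r)>0$, each such entry has exactly the form $\frac{1}{a+\boldsymbol{w}'\boldsymbol{y}}$ with a positive constant $a$ in front of a nonnegative linear function of $\boldsymbol{b}_{-j}$ in the denominator.

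Next I would expand the matrix product $\prod_{r=s}^{k-1}A_{r+1,r}D(r)$ entrywise into a sum over index paths. Because each $D(r)$ is diagonal, a generic term contributing to the $(i,i_0)$ entry is a product of entries of the constant matrices $A_{r+1,r}$ times a single factor $D(r)_{p_r}$ for each $r\in\{s,\ldots,k-1\}$. The crucial point is that distinct factors in this product live at distinct campaign times $r$, so they depend on \emph{disjoint} blocks of $\boldsymbol{b}_{-j}$; hence Lemma \ref{lemma:basic} applies directly and each path-product $\prod_{r=s}^{k-1}D(r)_{p_r}$ is a (jointly) convex function of $\boldsymbol{b}_{-j}$. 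By Lemma \ref{lemm:stochastic} the matrices $A_{r+1,r}=e^{-L(t_{r+1}-t_r)}$ are nonnegative, and the budgets $\boldsymbol{b}_j(s)$ (with $\boldsymbol{b}_j(0)=\boldsymbol{x}_0\ge \boldsymbol{0}$) are nonnegative, so each coordinate $x_{ij}(t_k)$ is a nonnegative linear combination of these convex path-products and is therefore itself convex in $\boldsymbol{b}_{-j}$.

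To finish, I would apply the standard vector composition theorem: the outer function $u_j(\cdot,\boldsymbol{b}_j(k),k)$ is convex and nondecreasing in each coordinate of its first argument (the assumed convexity and monotonicity), and each inner map $x_{ij}(t_k)$ is convex in $\boldsymbol{b}_{-j}$; composing a convex, coordinatewise-nondecreasing outer function with convex inner functions yields a convex function. Twice differentiability is immediate, since each $D(r)_{p_r}$ is smooth on the feasible set (its denominator is bounded below by $1$), making $\boldsymbol{x}_j(t_k)$ a smooth function of $\boldsymbol{b}_{-j}$, which composed with the twice differentiable $u_j$ remains twice differentiable.

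I expect the main obstacle to be the second step: correctly justifying that, after the entrywise expansion, the factors of each path-product depend on disjoint variable blocks, so that the hypothesis of Lemma \ref{lemma:basic} is genuinely met and the product is \emph{jointly} convex rather than merely separately convex in each factor. Some care is also needed to confirm that all coefficients in the resulting linear combination are nonnegative (relying on the nonnegativity of $e^{-Lt}$ from Lemma \ref{lemm:stochastic} and on budget nonnegativity), so that convexity is preserved under summation, and to verify that the outer monotonicity assumption covers every coordinate, since none of the inner maps is guaranteed to be affine.
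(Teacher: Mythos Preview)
Your proposal is correct and follows essentially the same approach as the paper: fix $\boldsymbol{b}_j$, expand each coordinate of $\boldsymbol{x}_j(t_k)$ via the product formula into a nonnegative combination of path-products, invoke Lemma~\ref{lemma:basic} for joint convexity of each path-product (using disjointness across time indices) and Lemma~\ref{lemm:stochastic} for nonnegativity of the coefficients, and finish with the composition rule. Your formulation is arguably slightly cleaner in that you explicitly absorb $b_{pj}(r)$ into the constant $a_{p,r}=1+b_{pj}(r)$ before applying Lemma~\ref{lemma:basic}, whereas the paper applies the lemma with $\boldsymbol{w}_r=\boldsymbol{1}$ to the full budget vector and then implicitly restricts to $\boldsymbol{b}_{-j}$.
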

\begin{proof}
If we fix player $j$-th strategy to $\boldsymbol{b}_{j}=(\boldsymbol{b}_{j}(k))_{k\in [K]}$, then $\boldsymbol{x}_j(t_k)$ is a vector function of other players' strategies $\boldsymbol{b}_{-j}=(\boldsymbol{b}_{\ell}(k))_{\ell\neq j, k\in [K]}$, i.e., $\boldsymbol{x}_j(t_k)\!:\!\boldsymbol{b}_{-j}\!\to\! [0,\!1]^n\!$.  Therefore, if we can show that each coordinate of $\boldsymbol{x}_j(t_k)$ is a convex function of $\boldsymbol{b}_{-j}$, then the convexity of $u_j(\boldsymbol{x}_j(t_k),\boldsymbol{b}_j(k),k)$ as a function of $\boldsymbol{b}_{-j}$ follows using the composition of the increasing convex function $u_j(\cdot,\boldsymbol{b}_j(k),k)$ and the convex vector function $\boldsymbol{x}_j(t_k)$.   

To show that each coordinate of $\boldsymbol{x}_{j}(t_k)$ is a convex function of $\boldsymbol{b}_{-j}$, note that the vectors $\boldsymbol{b}_j(s), s\in [K]$ are determined by the strategy of player $j$, and hence can be viewed as constants in the convexity analysis with respect to $\boldsymbol{b}_{-j}$.  Let us fix an arbitrary $s\in [K]$, and consider the summand $\big(\prod_{r=s}^{k-1}A_{r+1,r}D(r)\big)\boldsymbol{b}_j(s)$ in the definition of $\boldsymbol{x}_j(t_k)$ given in \eqref{eq:x-expression}. Since by Lemma \ref{lemm:stochastic} the matrices $A_{r+1,r}$ contain only nonnegative entries and $\boldsymbol{b}_j(s)$ are nonnegative constant vectors, each coordinate of $\big(\prod_{r=s}^{k-1}A_{r+1,r}D(r)\big)\boldsymbol{b}_j(s)$ can be expressed as
\begin{align}\label{eq:expanded-matrix}
\sum_{(i_s,\ldots,i_{k-1})\in [n]^{k-s}}p_{i_s,\ldots,i_{k-1}}\prod_{r=s}^{k-1}\frac{1}{1+\boldsymbol{1}'\boldsymbol{b}_{i_r}(r)},
\end{align} 
where $p_{i_s,\ldots,i_{k-1}}\ge 0$ are some nonnegative (possibly zero) coefficients and by some abuse of notation $\boldsymbol{b}_{i_r}(r):=(b_{i_r1}(r),\ldots,b_{i_rm}(r))'$. The reason is that each matrix $A_{r+1,r}D(r)$ in the product $\prod_{r=s}^{k-1}A_{r+1,r}D(r)$ contributes exactly one term of the form $\frac{1}{1+\boldsymbol{1}'\boldsymbol{b}_{i_r}(r)}$ for some $i_r\in [n]$ to the final expression obtained after matrix multiplication. Since the multiplying variables $\boldsymbol{b}_{i_r}(r)$ in the expression \eqref{eq:expanded-matrix} do not share the same time index $r$, they are independent and we can use Lemma \ref{lemma:basic} with $\boldsymbol{w}_r=\boldsymbol{1} \ \forall r$, to conclude that each of the summands in \eqref{eq:expanded-matrix}, and hence the entire expression \eqref{eq:expanded-matrix} is a convex function of $\boldsymbol{b}_{-j}$. Repeating the same argument for each $s=0,\ldots,k-1$, and summing all the terms shows that each coordinate of $\boldsymbol{x}_j(t_k)$ is a convex function of $\boldsymbol{b}_{-j}$.  Finally,  since each summand in \eqref{eq:expanded-matrix} is twice diffentiable over the positive orthant,  each coordinate of $\boldsymbol{x}_j(t_k)$ and hence $u_j(\boldsymbol{x}_j(t_k),\boldsymbol{b}_j(k),k)$ is a twice differentiable function.
\end{proof}

\end{document}